\newcommand{\G}{\mathcal{G}}
\newcommand{\E}{\mathcal{E}}
\newcommand{\V}{\mathcal{V}}
\newtheorem{theorem}{Theorem}[section]
\newtheorem{corollary}[theorem]{Corollary}
\newtheorem{proposition}[theorem]{Proposition}
\newtheorem*{remark}{Remark}
\newtheorem*{example}{Example}
\title[Ricci Curvature and Community Structure]{On the Relation between Graph Ricci Curvature and Community Structure}
\author{Theodora Bourni, Vasileios Maroulas, Sathyanarayanan Rengaswami}
\date{June 2024}
\pgfplotsset{compat=1.17}
\begin{document}

\begin{abstract}
The connection between curvature and topology is a very well-studied theme in the subject of differential geometry. By suitably defining curvature on networks, the study of this theme has been extended into the domain of network analysis as well. In particular, this has led to curvature-based community detection algorithms. In this paper, we reveal the relation between community structure of a network and the curvature of its edges. In particular, we give apriori bounds on the curvature of intercommunity edges of a graph.
\end{abstract}

\keywords{discrete curvature, Ollivier Ricci curvature, graph curvature, optimal transport, community structure, community detection}

\maketitle

\setcounter{tocdepth}{1}
\tableofcontents

\section{Introduction}

The connection between curvature and topology is a fundamental question in Riemannian geometry. Notable examples include the Gauss-Bonnet theorem, which relates the curvature of a surface to its Euler characteristic, and Myer's theorem, which bounds the diameter of a manifold in terms of its curvature (see \cite{lee2018introduction} for more details). Remarkably, successful definitions of curvature have been extended to graphs (Forman \cite{forman2003bochner}, Ollivier \cite{ollivier2009ricci}, Lin \cite{lin2011ricci}, Devriendt \cite{devriendt2022discrete}), generalizing the curvature notions on Riemannian manifolds and establishing analogous connections between curvature and topology.

A crucial topological question in the study of complex networks is their community structure (See \cite{fortunato2010community}, \cite{fortunato2016community} for a survey on community detection). This involves clustering nodes such that many edges connect nodes within the same cluster, while few edges connect nodes between different clusters. These clusters, referred to as communities, are defined based on the application at hand. This task is significant across various fields including computer science \cite{grout2006constrained, krishna1997cluster}, biology \cite{rives2003modular, spirin2003protein,, dunn2005use}, chemistry\cite{queen2023polymer}, logistics \cite{guimera2005worldwide, o1992clustering} , where graphs are commonly used to model real-world systems. Consequently, numerous methods based on diverse theories are available, such as partitioning algorithms and spectral methods (\cite{fortunato2010community} contains a survey of commonly used algorithms.)

A recent approach to community detection draws inspiration from the geometric notion of curvature. In this paper, we employ the Ollivier-Ricci curvature (ORC), originally defined by Ollivier \cite{ollivier2009ricci} using optimal transport theory. The essential idea is to compare the distance \(d(x,y)\) between two vertices \(x\) and \(y\) to the distance between the neighbors of \(x\) and \(y\) (defined in terms of optimal transport). If the latter distance is smaller, the edge is positively curved; if greater, it is negatively curved. Negatively curved edges act as ``bottlenecks," indicating that to move from the neighbors of \(x\) to those of \(y\), one must pass through the edge \(xy\). This concept is applied to rewire graph neural networks in \cite{topping2021understanding}.

Studies such as \cite{ni2019community} and \cite{sia2019ollivier} have observed that edges with positive ORC typically exist between nodes within the same community, whereas edges with negative ORC often connect nodes from different communities. This observation has been turned into a computational algorithm for community detection. \cite{sia2019ollivier} achieves this by sequentially deleting negatively curved edges, while \cite{ni2019community} employs ``Ricci flow with surgery" to elongate or contract edges based on their curvature, then removes the longest edges. These curvature-based methods have been shown to perform competitively against other industry-standard techniques.

A variant of ORC known as the \emph{dynamical ORC} has been used in \cite{gosztolai2021unfolding} to reveal community structures at varying scales of resolution. \emph{Mixed membership}, where a node is allowed to belong to multiple communities has been studied in \cite{tian2023curvature} and \cite{tian2023mixed} (Note that these papers use flow-based approaches similar to \cite{ni2019community}.)  Other curvature notions that are not based on optimal transport have also been used to attack the problem of community detection. One of the popular definitions is that of Forman \cite{forman2003bochner}. Though not equivalent to Ollivier's notion,it has the advantage of being computationally efficient, and its flow version has been used in community detection in \cite{weber2017characterizing}. The shortcomings of Forman's definition have been partially mitigated by modifying its definition (while retaining computational efficiency): one such is the Augmented Forman-Ricci Curvature, which has also been shown to be effective for community detection in \cite{fesser2023augmentations}.

From a theoretical point of view, ORC has been linked to other deep properties of graphs. Besides the work of Ollivier himself, one of the earliest theoretical analyses of ORC \cite{lin2011ricci}, which proves several geometrical and spectral properties of the ORC. Its relation to the heat flow on graphs has been studied in \cite{munch2019ollivier}. The spectrum of the Laplacian is also intimately connected to ORC, and this is shown in \cite{bauer2013ollivierricci}. Analyses regarding geometric properties of ORC such as flatness, rigidity have been studied in \cite{cushing2020rigidity}, \cite{cushing2021flatness}. In spite of the great theoretical interest in ORC and its relation to other properties of graphs, and a great practical interest in its application to community detection, there has not been a substantial amount of literature in the theoretical relation between ORC and community structure.

A basic question regarding this relationship is the following: it is observed that a single edge connecting two disjoint communities will have negative curvature, whereas a complete graph formed by all possible intercommunity edges will have positively curved edges. (This is discussed in Section \ref{sec:prelims}.) Therefore, the critical question is: what is the maximum number of intercommunity edges such that we can guarantee each one is negatively curved? More broadly, when examining two particular communities in a graph, what can we infer about the curvature of intercommunity edges? This paper aims to address this question (and more) quantitatively, as provided by the main theorem.

\begin{theorem} \label{thm:main_thm}
    Suppose $\G$ is a graph comprised of several communities. Let $C_i,C_j$ be distinguished communities in $\G$ whose sizes are $m$ and $n$. Let $k$ be the total number of edges that are either intercommunity edges between $C_i$ and $C_j$, or from any other community to $C_i$ or $C_j$.

    Then, if 
    \[k \le \frac{-m+\sqrt{m^2+4(m-1)(2n-1)}}{2},\]
    we have $\kappa(e) \leq 0$ for every intercommunity edge $e$ between $C_i$ and $C_j$.

    In particular, if $k \le \min_l|C_l|-1$ where $C_l$ are the various communities, the same conclusion holds.
\end{theorem}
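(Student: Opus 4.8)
The plan is to recast the conclusion as a lower bound on a Wasserstein distance and prove it by Kantorovich--Rubinstein duality. Fix an intercommunity edge $e=xy$ with $x\in C_i$, $y\in C_j$, and let $\mu_x,\mu_y$ be the probability measures near $x$ and near $y$ used to define $\kappa$. Since $d(x,y)=1$ we have $\kappa(e)=1-W_1(\mu_x,\mu_y)$, so $\kappa(e)\le 0$ is equivalent to $W_1(\mu_x,\mu_y)\ge 1$, and by duality it suffices to exhibit a $1$-Lipschitz $f$ with $\int f\,d\mu_x-\int f\,d\mu_y\ge 1$. Two structural facts drive everything. First, $C_i$ and $C_j$ are complete, so $x$ has exactly $m-1$ neighbours inside $C_i$ and $y$ has exactly $n-1$ neighbours inside $C_j$. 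Second, $k$ bounds the number of edges leaving $C_i\cup C_j$ improperly, hence at most $k$ vertices of $C_i$ (and at most $k$ of $C_j$) meet such an edge; every other vertex of $C_i$ is \emph{deep} — at distance $1$ from the rest of $C_i$ but at distance $\ge 2$ from everything else — and, crucially, a deep vertex of $C_i$ is at distance $\ge 3$ from a deep vertex of $C_j$, because a length-$2$ path between them would force a midpoint lying in $C_i\cap C_j=\varnothing$.

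For $f$ I would use a (truncated) signed-depth function: $f\equiv 1$ on the deep vertices of $C_i$, $f=0$ on the boundary vertices of $C_i$ (so $f(x)=0$), and $f$ decreasing into $C_j$ down to roughly $-2$ on its deep part; the distance estimates above are exactly what keeps $f$ $1$-Lipschitz. One then bounds $\int f\,d\mu_x$ from below using $\deg x\le (m-1)+(\#\text{improper edges at }x)\le (m-1)+k$ together with the fact that the $m-1$ internal neighbours of $x$ each contribute essentially $+1$, and bounds $\int f\,d\mu_y$ from above using that all but at most $k$ of the $n-1$ internal neighbours of $y$ are deep and contribute about $-2$, the non-negative contributions to $\mu_y$ coming only from the at most $k$ boundary vertices and from the $C_i$-neighbours of $y$. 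Feeding these estimates into the duality inequality and optimising over how the $k$ improper edges are spread, the worst configuration turns out to be ``all $k$ improper edges are $C_i$--$C_j$ edges incident to $x$, landing on $k$ distinct vertices of $C_j$''; there the transport can be evaluated exactly (with the uniform measure, $W_1(\mu_x,\mu_y)=3-\tfrac{2k}{m-1+k}-\tfrac{k+1}{n}$), and requiring this to be $\ge 1$ becomes, after clearing denominators, precisely $k^2+mk\le(m-1)(2n-1)$, i.e.\ precisely the bound on $k$ in the statement.

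The final (``in particular'') clause is then elementary: if $k\le\min_l|C_l|-1$ then $k\le\min(m,n)-1$, and since $t\mapsto t^2+mt$ is increasing for $t\ge 0$ it suffices to check $t^2+mt\le(m-1)(2n-1)$ at $t=\min(m,n)-1$, where the difference $(m-1)(2n-1)-(t^2+mt)$ equals $n(m-n)\ge 0$ when $n\le m$ and $2(m-1)(n-m)\ge 0$ when $m\le n$. I expect the genuine difficulty to be designing the test function so that it stays $1$-Lipschitz \emph{uniformly} over all admissible configurations — once improper edges are present a deep vertex of one community can sit at distance $2$ from a boundary vertex of the other, forcing $f$ to be flatter near the two boundaries than one would like — and in the degenerate regime where $C_i$ has no deep vertex at all (possible only if $m\le k$, which the hypothesis forces into $n>k$) one must instead argue directly: in any coupling the $\tfrac{m-1}{\deg x}$ units of $\mu_x$-mass carried by $C_i$ each move a distance $\ge 1$, only about $k/\deg y$ of that mass can be absorbed at cost exactly $1$, and the remainder pays $\ge 2$.
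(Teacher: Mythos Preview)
Your high-level strategy --- lower-bound $W_1(\mu_x,\mu_y)$ via Kantorovich duality with a signed-depth test function, then reduce to the quadratic $k^2+mk\le(m-1)(2n-1)$ --- is exactly what the paper does, and your verification of the ``in particular'' clause is correct. The paper, however, does not argue via an extremal configuration: it writes down one explicit $1$-Lipschitz $f$ (constant on a carefully defined partition of the whole graph into fifteen pieces), computes $\int f\,d(m_x-m_y)$ symbolically, and then shows by counting that this quantity dominates an expression that is automatically $\ge 1$ once the quadratic holds. No ``worst case'' ever enters; the bound is algebraic and uniform.

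Where your sketch has a real gap is precisely where you suspect it: the test function on $\G\setminus(C_i\cup C_j)$. You describe $f$ only on $C_i\cup C_j$, but $x$ and $y$ may have neighbours in other communities, so those vertices carry $\mu_x$- or $\mu_y$-mass and their $f$-values enter the integral; moreover a boundary vertex of $C_i$ may be adjacent to a vertex $v$ of a third community which is in turn adjacent to a boundary vertex of $C_j$, so $f(v)$ is pinned to a one- or two-element set and one must check consistency globally. The paper handles this by introducing the sets $G,H,I,L,M$ (common neighbours of $x,y$ in other communities; neighbours of $x$ only; neighbours of $y$ only; vertices in other communities reachable in two steps from the deep parts; everything else) and assigning $f$-values there so that the Lipschitz condition is visibly satisfied and the resulting inequality still reduces to the quadratic. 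Your extremal configuration ``all $k$ edges incident to $x$'' is a nice sanity check (and your formula $W_1=3-\tfrac{2k}{m-1+k}-\tfrac{k+1}{n}$ for it does clear to the correct quadratic), but it does not by itself prove anything about the other configurations; for that you still need the uniform test-function estimate, i.e.\ essentially the paper's partition.
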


We have the following when the community sizes are all the same:

\begin{corollary}\label{cor:1}
    Let $\G$ be a graph as in previous theorem. Suppose in addition that all communities have the same size $n$. Then if $k \le n-1$, we have $\kappa(e) \leq 0$ for every intercommunity edge $e$ between $C_1$ and $C_2$.
\end{corollary}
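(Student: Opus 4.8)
The plan is to deduce this directly from Theorem \ref{thm:main_thm} by specializing to the case of equal community sizes. I would apply the theorem with $C_i = C_1$ and $C_j = C_2$, so that the two community sizes $m$ and $n$ appearing there satisfy $m = n$. It then suffices to check that the threshold on the right-hand side of the theorem's hypothesis,
\[ \frac{-m+\sqrt{m^2+4(m-1)(2n-1)}}{2}, \]
reduces to exactly $n-1$ when $m = n$. Once this is established, the hypothesis $k \le n-1$ of the corollary is precisely the hypothesis of Theorem \ref{thm:main_thm}, so the conclusion $\kappa(e) \le 0$ for every intercommunity edge $e$ between $C_1$ and $C_2$ follows verbatim.

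To verify the reduction, I would substitute $m = n$ and compare $n-1$ with $\frac{-n+\sqrt{n^2+4(n-1)(2n-1)}}{2}$. Rearranging the claimed inequality $n - 1 \le \frac{-n+\sqrt{n^2+4(n-1)(2n-1)}}{2}$ gives $3n - 2 \le \sqrt{n^2+4(n-1)(2n-1)}$; since both sides are nonnegative for $n \ge 1$, squaring is legitimate, and one computes that the left side is $9n^2 - 12n + 4$ while the right side expands to $n^2 + 4(2n^2 - 3n + 1) = 9n^2 - 12n + 4$ as well. Hence equality holds throughout, and the threshold is exactly $n - 1$, as needed.

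Alternatively — and even more directly — I would simply invoke the final ``in particular'' clause of Theorem \ref{thm:main_thm}: when every community of $\G$ has size $n$, we have $\min_l |C_l| = n$, so the assumption $k \le n - 1 = \min_l |C_l| - 1$ immediately gives $\kappa(e) \le 0$ for every intercommunity edge between $C_1$ and $C_2$.

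There is essentially no obstacle here; the corollary is a direct specialization of the main theorem. The only point that warrants a moment's care is the legitimacy of squaring in the verification above, which is justified because $3n - 2 \ge 0$ for all admissible community sizes (indeed $n \ge 2$ in any nontrivial setting).
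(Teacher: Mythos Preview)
Your proposal is correct and essentially identical to the paper's own proof: both specialize Theorem~\ref{thm:main_thm} to $m=n$ and verify algebraically that the threshold $\frac{-n+\sqrt{n^2+4(n-1)(2n-1)}}{2}$ simplifies to exactly $n-1$. The paper computes the square root directly as $\sqrt{9n^2-12n+4}=3n-2$, while you reach the same identity by squaring both sides; the content is the same.
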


In the case where the graph only has two communities, we have the following:

\begin{corollary}\label{cor:2}
    Suppose $\G$ is a graph comprised of only two communities $C_1,C_2$ whose sizes are $m$ and $n$. Let $k$ be the total number of  intercommunity edges between $C_1$ and $C_2$.

    Then, if 
    \[k \le \frac{-m+\sqrt{m^2+4(m-1)(2n-1)}}{2},\]
    we have $\kappa(e) \leq 0$ for every intercommunity edge $e$ between $C_1$ and $C_2$.

    If $m=n$, the same conclusion holds for $k \le n-1$.
\end{corollary}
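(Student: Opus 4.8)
The plan is to obtain Corollary~\ref{cor:2} as an immediate specialization of Theorem~\ref{thm:main_thm}. Since $\G$ is comprised of only the two communities $C_1$ and $C_2$, there is no ``other community,'' hence no edges from any other community to $C_1$ or $C_2$. Applying Theorem~\ref{thm:main_thm} with $C_i = C_1$ (of size $m$) and $C_j = C_2$ (of size $n$), the quantity $k$ appearing there therefore counts exactly the intercommunity edges between $C_1$ and $C_2$, which is precisely the $k$ in the statement of the corollary. The hypothesis $k \le \frac{-m+\sqrt{m^2+4(m-1)(2n-1)}}{2}$ then yields $\kappa(e)\le 0$ for every such edge $e$, which is the first assertion.

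For the case $m=n$, I would simply simplify the bound. Substituting $m=n$ into the radicand gives
\[
m^2 + 4(m-1)(2n-1) = n^2 + 4(n-1)(2n-1) = 9n^2 - 12n + 4 = (3n-2)^2,
\]
so (for $n\ge 1$) the right-hand side of the inequality becomes $\tfrac{-n + (3n-2)}{2} = n-1$. Hence the condition $k \le n-1$ is exactly the hypothesis of Theorem~\ref{thm:main_thm} in this case, and the conclusion follows. Alternatively, one can invoke the ``in particular'' clause of Theorem~\ref{thm:main_thm} (or Corollary~\ref{cor:1}) directly, noting that for two communities of equal size $\min_l |C_l| = n$.

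Since essentially all the work is already contained in Theorem~\ref{thm:main_thm}, there is no substantive obstacle here; the only point requiring a word of care is the reading of ``comprised of only two communities'' as meaning that $C_1$ and $C_2$ together account for all the relevant structure (equivalently, that there is no third community contributing edges to $k$), under which the reduction to the main theorem is immediate. The algebraic identity $m^2+4(m-1)(2n-1) = (3n-2)^2$ at $m=n$ is the only computation involved, and it is routine.
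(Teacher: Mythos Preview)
Your proposal is correct and takes essentially the same approach as the paper: both observe that with only two communities there are no edges to any ``other'' community, so the $k$ of Theorem~\ref{thm:main_thm} reduces to the intercommunity edge count and the theorem applies directly (the paper phrases this by revisiting the proof and noting $G,H,I,L,M,J,K=\emptyset$ and $k_2=0$, but the content is identical). Your handling of the $m=n$ case via the factorization $(3n-2)^2$ is exactly the computation the paper carries out for Corollary~\ref{cor:1}.
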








The paper is organized as follows. In Section \ref{sec:prelims}, we do a quick review of optimal transport as it pertains to graphs, define the Ollivier-Ricci curvature of an edge of a graph, and give some examples. In Section \ref{sec:zero_curv}, we give a key example that simultaneously motivates the claims in the paper and highlights the computational techniques used to prove the main theorem. In Section \ref{sec:main_thm}, we provide a proof of the main theorem and the corollaries. In Section \ref{sec:empirical}, we show that the bound prescribed in the theorem is sharp, but at the same time we show experimental results that show that we have a lot of latitude even if we go beyond the theoretical limit prescribed in the theorem.

\section{Preliminaries}\label{sec:prelims}

\subsection{Definitions} \label{subsec:def}
Let $\G = (\V,\E)$ be a graph.
\begin{enumerate}
    \item A \emph{community} of $\G$ is a maximally connected subgraph of $\G$.
    \item An edge whose endpoints lie in different communities is called an \emph{intercommmunity edge}.
    \item An edge whose endpoints both lie inside the same community is called an \emph{intracommunity edge}.
\end{enumerate}

We remark that this is definition of community is only one of several possible definitions, but this simplifies our mathematical analysis.

\subsection{Optimal Transportation on Graphs}

 Let $\mu, \nu$ be probability measures on a graph, and let the vertices be enumerated $1,\ldots,n$. A \emph{transference plan} $\pi = (\pi_{ij})$ is an $n \times n$  matrix with nonnegative entries such that $ \sum_j \pi_{ij} = \mu(i)$, and $\sum_i \pi_{ij} = \nu(j)$. More concisely, $\pi$ is a joint probability distribution on $\V \times \V$ whose marginals are $\mu$ and $\nu$. We define $\Pi$ to be the set of all transference plans between $\mu$ and $\nu$.

Closeness between two probability distributions can be measured via the \emph{1-Wasserstein distance} (or \emph{earthmover's distance}) which is defined as follows:

\begin{equation}\label{eqn:was_inf}
     W(\mu, \nu) = W_1(\mu,\nu) \coloneqq \displaystyle \inf_{\pi \in \Pi} \sum_{i,j} \pi_{ij}d(i,j)
\end{equation}

where $d(i,j)$ is the graph distance between vertices $i,j$.

\begin{proposition}
    Under the metric $W$ defined above, the set of all probability measures on $\V$ is a metric space.
\end{proposition}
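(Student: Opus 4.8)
The plan is to verify directly that $W=W_1$ satisfies the three axioms of a metric: nonnegativity together with $W(\mu,\nu)=0\iff\mu=\nu$, symmetry, and the triangle inequality. Throughout I would rely on two standing observations. First, for any $\mu,\nu$ the set $\Pi$ of transference plans between them is nonempty, since the product plan $\pi_{ij}=\mu(i)\nu(j)$ always lies in it; moreover $\Pi$ is a closed and bounded subset of $\R^{n\times n}$, hence compact, and $\pi\mapsto\sum_{i,j}\pi_{ij}d(i,j)$ is linear, so the infimum defining $W(\mu,\nu)$ is attained by some optimal plan. Second, $d$ is itself a metric on $\V$; I will assume $\G$ is connected so that $d$ is finite (otherwise one reads $W$ as an extended metric valued in $[0,\infty]$ and the arguments below are unchanged).

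Nonnegativity is immediate because every $\pi_{ij}\ge 0$ and $d(i,j)\ge 0$. For the equality case: if $\mu=\nu$, the diagonal plan $\pi_{ii}=\mu(i)$, $\pi_{ij}=0$ for $i\ne j$, belongs to $\Pi$ and has cost $\sum_i\mu(i)d(i,i)=0$, so $W(\mu,\nu)=0$. Conversely, if $W(\mu,\nu)=0$, take an optimal plan $\pi$; since all its summands are nonnegative and sum to $0$, we get $\pi_{ij}d(i,j)=0$ for every $i,j$, and as $d(i,j)>0$ whenever $i\ne j$ this forces $\pi_{ij}=0$ off the diagonal. The marginal constraints then give $\mu(i)=\sum_j\pi_{ij}=\pi_{ii}=\sum_j\pi_{ji}=\nu(i)$ for all $i$, i.e. $\mu=\nu$. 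Symmetry follows because transposition $\pi\mapsto\pi^{\mathsf T}$ is a cost-preserving bijection from the set of plans between $\mu$ and $\nu$ onto the set of plans between $\nu$ and $\mu$ (using $d(i,j)=d(j,i)$), so the two infima coincide.

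The only step with genuine content is the triangle inequality, which I would establish via the gluing construction. Given measures $\mu,\nu,\lambda$ and optimal plans $\pi$ between $\mu,\nu$ and $\rho$ between $\nu,\lambda$, define
\[
\sigma_{ik}=\sum_{j:\,\nu(j)>0}\frac{\pi_{ij}\,\rho_{jk}}{\nu(j)},
\]
noting that $\pi_{ij}=\rho_{jk}=0$ whenever $\nu(j)=0$ (since these are nonnegative and their $\nu(j)$-marginals vanish), so no mass is lost. A short check of row and column sums, using $\sum_k\rho_{jk}=\nu(j)=\sum_i\pi_{ij}$, shows $\sigma$ is a transference plan between $\mu$ and $\lambda$. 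Then, applying the triangle inequality for $d$ termwise,
\[
W(\mu,\lambda)\le\sum_{i,k}\sigma_{ik}d(i,k)\le\sum_{i,j,k}\frac{\pi_{ij}\rho_{jk}}{\nu(j)}\bigl(d(i,j)+d(j,k)\bigr)=\sum_{i,j}\pi_{ij}d(i,j)+\sum_{j,k}\rho_{jk}d(j,k),
\]
where in the last equality I summed out $k$ in the first block and $i$ in the second. The right-hand side equals $W(\mu,\nu)+W(\nu,\lambda)$, which finishes the argument.

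I expect the main obstacle to be the bookkeeping in the gluing step — correctly handling the vertices with $\nu(j)=0$ and verifying both marginals of $\sigma$ — while every other part follows essentially at once from the definitions and the compactness remark.
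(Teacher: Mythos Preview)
Your argument is correct: the compactness remark guarantees an optimal plan exists, the identity-of-indiscernibles and symmetry steps are clean, and the gluing construction for the triangle inequality is carried out carefully, including the $\nu(j)=0$ case. There is nothing to fix.

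As for comparison with the paper: the paper does not actually supply a proof of this proposition. It is stated as a preliminary fact with no argument attached, presumably because it is a standard result in optimal transport. So there is no ``paper's own proof'' to compare against; your write-up simply fills in the omitted verification, and the gluing-lemma route you take is exactly the textbook one.
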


Note that computing $W$ amounts to solving a linear programming problem. Therefore, by duality, we have an equivalent formulation for $W$ via a maximization problem. To present this formulation, we first define a $1-Lipschitz$ function to be a function $f$ such that $|f(x)-f(y)| \le d(x,y).$ On (combinatorial) graphs, this is equivalent to insisting that the values of $f$ on adjacent nodes do not differ by more than $1.$

\begin{proposition}
    Let $\mathcal{F}$ be the set of $1$-Lipschitz functions on $G$. Then,
    \begin{equation}\label{eqn:was_sup}
        W(\mu,\nu) = \displaystyle\sup_{f\in\mathcal{F}} \left\{\sum_{z\in G} f(z)(\mu(z)-\nu(z))\right\}
    \end{equation}
\end{proposition}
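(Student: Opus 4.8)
The plan is to recognize $W(\mu,\nu)$ as the optimal value of a finite linear program and then invoke strong LP duality, exactly as the remark preceding the statement indicates. With $c_{ij} := d(i,j)$, the infimum in \eqref{eqn:was_inf} is the value of the primal program: minimize $\sum_{i,j} c_{ij}\pi_{ij}$ over $\pi_{ij}\ge 0$ subject to $\sum_j \pi_{ij} = \mu(i)$ for all $i$ and $\sum_i \pi_{ij} = \nu(j)$ for all $j$. This program is feasible (the product plan $\pi_{ij} = \mu(i)\nu(j)$ works) and its objective is bounded below by $0$, so strong duality applies. Assigning multipliers $\phi_i$ and $\psi_j$ to the two families of equality constraints, the dual program is: maximize $\sum_i \mu(i)\phi_i + \sum_j \nu(j)\psi_j$ subject to $\phi_i + \psi_j \le c_{ij}$ for all $i,j$, with $\phi,\psi$ free in sign. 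Strong duality then gives $W(\mu,\nu) = \max\{\sum_i \mu(i)\phi_i + \sum_j \nu(j)\psi_j : \phi_i + \psi_j \le d(i,j)\ \forall i,j\}$.

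The inequality $\ge$ in \eqref{eqn:was_sup}, i.e.\ $\sup_{f\in\mathcal{F}}\sum_z f(z)(\mu(z)-\nu(z)) \le W(\mu,\nu)$, I would obtain directly from \eqref{eqn:was_inf} without even using the dual: for any transference plan $\pi$ and any $f\in\mathcal{F}$,
\[
\sum_{z} f(z)\bigl(\mu(z)-\nu(z)\bigr) = \sum_{i,j} \pi_{ij}\bigl(f(i)-f(j)\bigr) \le \sum_{i,j}\pi_{ij}\,d(i,j),
\]
using the marginal constraints in the first step and $1$-Lipschitzness in the second; taking the infimum over $\pi$ and then the supremum over $f$ yields the claim. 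Equivalently, each $f\in\mathcal{F}$ supplies the dual-feasible point $(\phi,\psi)=(f,-f)$, whose dual objective is exactly $\sum_z f(z)(\mu(z)-\nu(z))$.

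The content is the reverse inequality, and this is the step I expect to be the main obstacle. Starting from an optimal dual pair $(\phi,\psi)$, I would improve it, without decreasing the objective, to one of the form $(f,-f)$ with $f\in\mathcal{F}$. Replacing $\psi$ by its $c$-transform $\psi'_j := \min_i\bigl(d(i,j)-\phi_i\bigr)$ can only raise $\psi$ pointwise (by feasibility of $(\phi,\psi)$) and hence, since $\nu\ge 0$, can only raise the objective; it preserves the constraints; and the triangle inequality for $d$ makes $\psi'$ a $1$-Lipschitz function. Replacing $\phi$ by $\phi'_i := \min_j\bigl(d(i,j)-\psi'_j\bigr)$ does the same for $\phi$, and one checks that $\psi'$ is then also the $c$-transform of $\phi'$; setting $i=j$ in this identity gives $\psi'_j \le -\phi'_j$, while $1$-Lipschitzness of $\phi'$ gives $\psi'_j = \min_i(d(i,j)-\phi'_i) \ge -\phi'_j$, so $\psi' = -\phi'$. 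Thus $f:=\phi'$ lies in $\mathcal{F}$ and realizes the dual optimum, whence $W(\mu,\nu) = \sum_z f(z)(\mu(z)-\nu(z)) \le \sup_{g\in\mathcal{F}}\sum_z g(z)(\mu(z)-\nu(z))$, which together with the second paragraph proves \eqref{eqn:was_sup}. I would flag at the outset that we are using the finiteness of $\V$ fixed by the enumeration $1,\dots,n$, so that finite-dimensional strong LP duality applies verbatim; for an infinite graph one restricts to finitely supported $\mu,\nu$, which is the only situation needed for the Ollivier--Ricci curvature. The delicate bookkeeping is entirely in the $c$-transform step: verifying that each replacement keeps feasibility and does not lower the objective, and that iterating the transform forces the antisymmetry $\psi'=-\phi'$.
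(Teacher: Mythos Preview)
Your proof is correct. Note, however, that the paper does not actually prove this proposition: it is stated without proof, preceded only by the one-sentence remark that computing $W$ is a linear program and hence duality applies. So there is no ``paper's own proof'' to compare against beyond that hint.

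What you have done is supply the details behind that hint: set up the primal transport LP, write its dual, and then reduce the general dual pair $(\phi,\psi)$ to the antisymmetric form $(f,-f)$ via the $c$-transform. This is the standard Kantorovich--Rubinstein argument and all steps check out, including the verification that $\psi'=-\phi'$ after the double transform (using $d(j,j)=0$ for one inequality and $1$-Lipschitzness of $\phi'$ for the other). Your remark on finiteness of $\V$ is also apt: the paper's enumeration $1,\dots,n$ makes this a finite-dimensional LP, so strong duality needs no further justification.
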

 A function $f$ that achieves this supremum is called a \emph{Kantorovich potential}.

\subsection{The Ollivier-Ricci Curvature}
Ollivier \cite{ollivier2009ricci} defined Ricci curvatures on general Markov chains on metric spaces, with random walks as the building blocks. The following is a simplified exposition for graphs. A \emph{random walk} is a family of probability distributions $\{m_x\}_{x \in \G}$. For any $\alpha \in [0,1]$, the \emph{$\alpha$-lazy random walk} is one where the probability measures at each node are given by 

\[m_x(z) = m^\alpha_x (z) = 
\begin{cases}
   \alpha, \, z=x\\
   \frac{1-\alpha}{d_x}, \, (zx)\in \E\\
   0, \, \text{otherwise}
\end{cases}\]

Finally, the $\alpha$-\emph{Ollivier-Ricci curvature} of an edge $e=(xy)$ is defined as 
\begin{equation}
    \kappa(e) = \kappa^\alpha(e) \coloneqq 1 - \frac{W(m_x,m_y)}{d(x,y)}.
\end{equation}

We deal with combinatorial graphs in this paper, and therefore, all edges have length $1$. Thus the formula reduces to 
\begin{equation}
    \kappa(e) = 1 - W(m_x,m_y).
\end{equation}

\begin{example}
The intuition behind the Ollivier-Ricci curvature is captured by the following examples:
\begin{enumerate} [a.]
    \item An edge of the lattice $\mathbf{Z}^n$ has curvature $\kappa = 0$. So does an edge of the cycle $C_n$ for $n \ge 6.$
    \item An edge of the complete graph $K_n$ has curvature $\kappa=\frac{n(1-\alpha)}{n-1} >0$.
    \item Imagine a `dumbbell' graph comprised of two communities of size $m,n$, joined by a single intercommunity edge. This edge has curvature $\kappa=2(1-\alpha)\left(\frac{1}{m}+\frac{1}{n}-1\right)$, which is negative for all $m,n \ge 3$.
\end{enumerate}
\end{example}

\begin{remark}
    The claims in the above example can be proved by following the general strategy described here.
    \begin{enumerate}[1.]
        \item To establish an upper bound on the curvature, we need a lower bound on $W(m_x,m_y)$. This uses \eqref{eqn:was_sup} and requires prescribing an explicit potential function.
        
        \item To establish a lower bound on the curvature, we need an upper bound on $W(m_x,m_y)$. This uses \eqref{eqn:was_inf} and requires prescribing an explicit transference plan.

        \item If we can show that the curvature is bounded above and below by the same constant $C$, then the curvature equals $C$.
    \end{enumerate}

\end{remark}

We will apply this strategy to an important nontrivial example in the next section.

\section{A Zero-Curvature Example}\label{sec:zero_curv}

To see what motivates the claims of this paper, we start with the case that the two communities have the same size $n \ge 3$. Consider the number of intercommunity edges below which it is guaranteed that every intercommunity edge is nonpositively curved. We show an explicit construction which suggests that this number can be no more than $n-1$. (A proof of the sufficiency of this claim is given by the main theorem of this paper. Sharpness is given by Proposition \ref{prop:pos_curv}.)

\begin{theorem}
    For communities of size $(n,n)$, there is a configuration of $n-1$ edges between the communities such that each intercommunity edge has zero curvature.
\end{theorem}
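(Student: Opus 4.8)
The plan is to exhibit an explicit graph on two vertex sets $A = \{a_1,\dots,a_n\}$ and $B = \{b_1,\dots,b_n\}$, each of which is a community (hence each induces a connected subgraph, e.g.\ a complete graph, on $n\ge 3$ vertices), together with exactly $n-1$ intercommunity edges, and then to verify via the strategy in the Remark that each such edge $e=(x,y)$ satisfies $\kappa(e)=0$, i.e.\ $W(m_x,m_y)=d(x,y)=1$. The natural candidate configuration is a ``matching-like'' or ``fan/path'' arrangement: for instance the $n-1$ edges $a_ib_i$ for $i=1,\dots,n-1$ (leaving $a_n$ and $b_n$ with no cross edge), or a path $a_1b_1, b_1a_2, a_2b_2,\dots$ alternating across the cut. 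I would pick whichever makes the degree bookkeeping cleanest; the key structural feature to engineer is that for an intercommunity edge $e=(a_i,b_i)$, the endpoints $a_i,b_i$ each have exactly one cross-neighbor (namely each other) so that the mass $m_{a_i}$ and $m_{b_i}$ that must be transported across the cut is small and matched up, while the bulk of each measure sits inside its own community where it can be permuted at zero cost (intracommunity distance $\le 1$... actually all relevant distances will need to be computed, but within a clique they are $1$, and the key is that the two measures can be coupled so that mass moves distance at most $1$ on average, matching $d=1$).

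Concretely, for the lower bound $W(m_x,m_y)\ge 1$ I would produce a $1$-Lipschitz potential $f$ via \eqref{eqn:was_sup}: set $f\equiv 0$ on $A$ and $f\equiv 1$ on $B$ (this is $1$-Lipschitz since the only $A$--$B$ adjacencies are the chosen cross edges, across which $f$ jumps by exactly $1$, and it is constant inside each community). Then $\sum_z f(z)(m_{a_i}(z)-m_{b_i}(z)) = m_{b_i}(B) - m_{a_i}(B)$; I must check the configuration is chosen so that $a_i$ places essentially all its mass in $A$ and $b_i$ places essentially all its mass in $B$, making this difference equal to $1$ (up to the laziness parameter $\alpha$ — more carefully, the potential evaluates to $\sum_{z\in B} m_{b_i}(z) - \sum_{z\in B} m_{a_i}(z)$, and with $a_i$'s only $B$-neighbor being $b_i$ and $b_i$'s only $A$-neighbor being $a_i$, a short computation gives exactly $1$). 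For the upper bound $W(m_x,m_y)\le 1$, I would exhibit an explicit transference plan $\pi$: couple the self-mass and the intracommunity mass of $m_{a_i}$ to the intracommunity mass of $m_{b_i}$ via any bijection of $A\setminus\{a_i\}$-type vertices to $B\setminus\{b_i\}$-type vertices with all transport distances equal to $1$ (possible since $A$ and $B$ are cliques of the same size and every $a$--$b$ distance is $1$, as the two communities are joined), handle the cross-mass terms (the $b_i$-coordinate of $m_{a_i}$ and the $a_i$-coordinate of $m_{b_i}$) separately, and check the total cost is exactly $1$.

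The main obstacle I anticipate is the distance computation and the careful choice of configuration: I need $d(a,b)=1$ for \emph{all} $a\in A, b\in B$ that carry mass in the two relevant measures (so that the transport plan costs exactly $1$, not more), which constrains how the $n-1$ cross edges may be placed — a pure perfect-matching-minus-one-edge works only if it still makes every cross pair at distance $1$, which it does \emph{not} in general, so I expect the correct configuration is one where all $n-1$ cross edges share a common structure (e.g.\ all incident to a single vertex, a ``star'' across the cut, or arranged so the cut is a connected bipartite graph of diameter small enough). Getting $W=1$ on the nose — neither $<1$ (which would force positive curvature) nor $>1$ — is the delicate point, and I would resolve it by choosing the configuration and then doing the matched pair of LP-bound computations above, tracking the factor $\alpha$ throughout (the final answer $\kappa=0$ should come out independent of $\alpha$, which is a useful consistency check).
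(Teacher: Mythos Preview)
Your configuration (the matching $a_ib_i$, $i=1,\dots,n-1$, on two copies of $K_n$) is exactly the one the paper uses, but both of your bound arguments break down.

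\textbf{The lower bound on $W$ fails.} Your potential $f\equiv 0$ on $A$, $f\equiv 1$ on $B$ cannot certify $W(m_{a_i},m_{b_i})\ge 1$. With the matching configuration, $d_{a_i}=d_{b_i}=n$, and
\[
\sum_z f(z)\bigl(m_{b_i}(z)-m_{a_i}(z)\bigr)=m_{b_i}(B)-m_{a_i}(B)
=\Bigl(\alpha+(n-1)\tfrac{1-\alpha}{n}\Bigr)-\tfrac{1-\alpha}{n}
=1-\tfrac{2(1-\alpha)}{n}<1.
\]
More generally, an indicator-of-one-side potential can \emph{never} give $W\ge 1$ for an intercommunity edge $xy$: since $x$ has $y$ as a neighbour and vice versa, both $m_x$ and $m_y$ put strictly positive mass on each side of the cut, so $m_y(B)-m_x(B)<1$. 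The paper's potential is genuinely different: it takes four values $0,-1,-2,-3$, assigning the extreme values $0$ and $-3$ to the two \emph{unmatched} vertices $a_n$ and $b_n$ (which are at graph distance $3$), and $-1,-2$ to the rest of $A,B$ respectively. That extra spread is what makes the dual value come out to exactly $1$; the unmatched pair is doing real work here, and your two-valued potential throws that information away.

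\textbf{The upper bound on $W$ is also off.} You correctly worry that in the matching configuration not all cross distances are $1$ (indeed $d(a_j,b_k)=2$ for $j\ne k$ matched, and $d(a_n,b_n)=3$), but your proposed fix --- switch to a star or some other configuration with all relevant cross distances equal to $1$ --- goes in the wrong direction. The paper keeps the matching and uses a transport plan that sends each $a_j$ to its partner $b_j$ along the matching edges: cost $1$ for $j\le n-1$ and cost $3$ for the unmatched pair $a_n\to b_n$. The single distance-$3$ move is exactly compensated by a mass $\tfrac{1-\alpha}{n}$ that stays put (distance $0$) at each of $a_i$ and $b_i$, and the total cost is $1$. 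So the zero-curvature conclusion comes from a balance between one long hop and two free stays, not from arranging all hops to have length $1$.
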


\begin{proof}
 We give an explicit construction. Let $A,B$ be a complete graph on the vertices $\{a_0,\ldots,a_{n-1}\},\{b_0,\ldots,b_{n-1}\}$ respectively. Add inter-community edges $a_ib_i,i=0,\ldots,n-2$ (See Figure \ref{fig:zero_curv} for an illustration.) We claim that every edge $a_i b_i$ has zero curvature. Indeed by symmetry, we only need to show it for one of them, say $a_0 b_0$.

\begin{figure}
    \centering
    \includegraphics[scale=0.5]{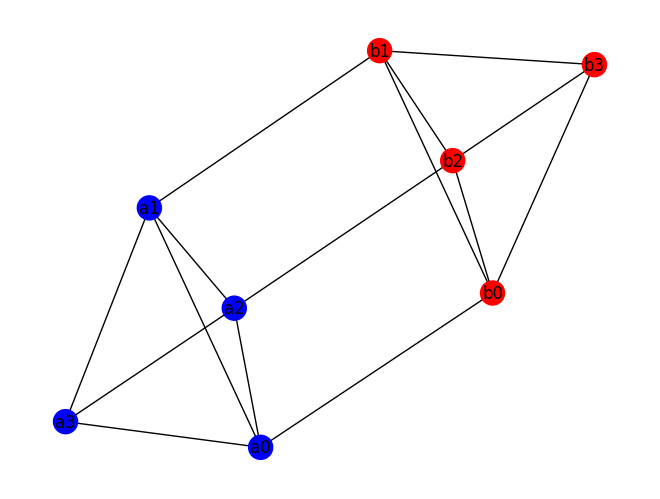}
    \caption{A configuration with zero curvature on all intercommunity edges}
    \label{fig:zero_curv}
\end{figure}

\begin{table}[htbp]
\centering
\begin{tabular}{|c|c|c|c|c|c|c|}
\hline
 - & $a_{n-1}$ & $a_1,\ldots,a_{n-2}$ & $a_0$ & $b_0$ & $b_1,\ldots,b_{n-2}$ & $b_{n-1}$ \\
\hline
$m_{a_0}$ & $\frac{1-\alpha}{n}$ & $\frac{1-\alpha}{n}$ & $\alpha$ & $\frac{1-\alpha}{n}$ & $0$ & $0$\\
\hline
$m_{b_0}$ & 0 & 0 & $\frac{1-\alpha}{n}$ & $\alpha$ & $\frac{1-\alpha}{n}$ & $\frac{1-\alpha}{n}$ \\
\hline
$f$ & 0 & -1 & -1 & -2 & -2 & -3 \\
\hline
\end{tabular}
\caption{A potential function for curvature lower upper bounds}
\label{tab:zero_curv}
\end{table}

To get an upper bound on the curvature, we obtain a lower bound on $W(m_{a_0},m_{b_0})$. We use $f$ as defined in the table \ref{tab:zero_curv}. Applying the equation \eqref{eqn:was_sup}, an explicit calculation shows that $W(m_{a_0},m_{b_0}) \ge 1$, which implies $\kappa(a_0 b_0) \le 0$.

To get a lower bound on curvature, we obtain an upper bound on $W(m_{a_0},m_{b_0})$. We use the following transference plan:
\[\pi_{a_i b_i} = \begin{cases}
                        \alpha-\frac{1-\alpha}{n}, \quad i=0\\
                        \frac{1-\alpha}{n}, \quad i=1,\ldots,n-1
                    \end{cases}\]
together with $\pi_{b_0 b_0} = \frac{1-\alpha}{n}$ and $0$ between any other pair of vertices not specified previously.
Note that the associated distances are
\[d(a_i, b_i) = \begin{cases}
                        1, \quad i=0,\ldots,n-2\\
                        3, \quad i=n-1
                    \end{cases}\]
Applying equation \eqref{eqn:was_inf}, it follows that $W(m_{a_0},m_{b_0}) \le 1$ and hence $\kappa(a_0 b_0) \ge 0$. Thus we conclude that \[\kappa(a_0 b_0) = 0.\]

\end{proof}

\begin{remark}
    The configuration in the previous example is not unique. Indeed, it can be shown using the same technique that the intercommunity edges in Figure \ref{fig:zero_curv_second} have curvature 0.
\end{remark}

\begin{figure}
    \centering
    \includegraphics[scale=0.5]{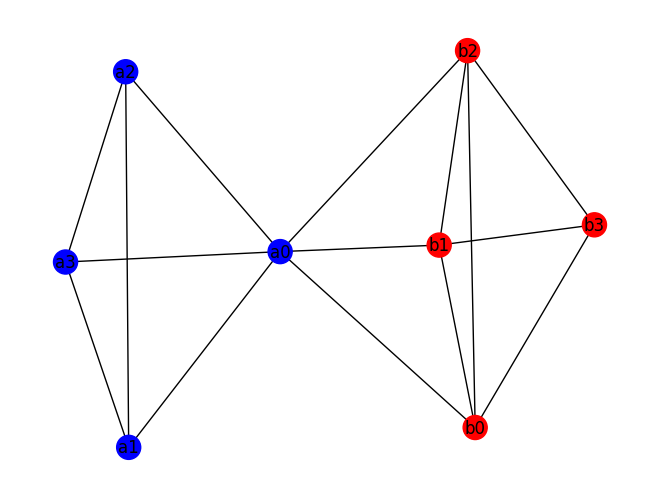}
    \caption{Another configuration with zero curvature on intercommunity edges}
    \label{fig:zero_curv_second}
\end{figure}

\section{Proof of Main Theorem} \label{sec:main_thm}

\begin{figure}[hbt!]
    \centering
    \includegraphics[scale=0.7]{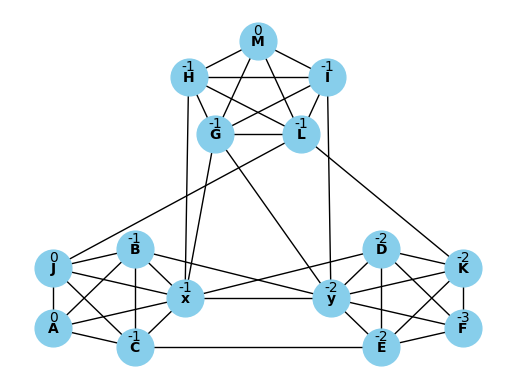}
    \caption{Three communities, with a potential function}
    \label{fig:3comm}
\end{figure}
\begin{center}
    
\end{center}

Let $\G$ be  graph, and $C_1,C_2$ be two communities in $\G$. Suppose  $e=xy \in \E$, $x \in C_1, y \in C_2$. We obtain an upper bound on $\kappa(e)$ by getting a lower bound on $W(m_x,m_y)$. To do so, we need a potential function. The key step is to partition the graph into several subsets on which $f$ will be constant.

Let $C_3 = G - C_1 -C_2$, which could be a union of several communities. We partition $C_1$ in five subsets $\{x\},A,B,C,J$, $C_2$ in $\{y\},D,E,F,K$ and $C_3$ in $G,H,I,L,M$, which we define in the following way:
\begin{itemize}
    \item $B=\{z \in C_1-x : yz\in \E\}$
    \item $D=\{w \in C_2-y : xw\in \E\}$
    
    \item $C=\{z \in C_1-x-B : zw\in \E \,\text{ for some } w\in C_2\}$
    \item $E=\{z \in C_2-y-D : zw\in \E \,\text{ for some } w\in C_2\}$

    \item $J=\{z \in C_1-x-B-C : d(z,w)=2,\text{ for some } w\in C_2\}$
    \item $K=\{w \in C_w-y-B-E : d(z,w)=2,\text{ for some } z\in C_1\}$

    \item $A=C_1-x-B-C-J$
    \item $F=C_2-y-D-E-K$

    \item $G=\{v \in \G-C_1-C_2:\, xv,yv\in\E\}$

    \item $H=\{v \in \G-C_1-C_2-G:\, xv\in\E\}$
    \item $I=\{v \in \G-C_1-C_2-G:\, yv\in\E\}$

    \item $L=\{v \in \G-C_1-C_2-G-H-I:\, zv\in\E \text{ for some } z\in J\}=\{v \in \G-C_1-C_2:\, wv\in\E \text{ for some } w\in K\}$
    \item $M=\G-C_1-C_2-G-H-I-L$

\end{itemize}

This configuration, along with the values of the function $f$ are illustrated in Figure \ref{fig:3comm}. The number shown inside each ``node" (or subset of $\G$ to be precise) is the value of the function $f$. Note that we have not added all possible edges: for instance, there could be edges between $A,M$. The important thing is that $A,J$ do not have neighbours in $C_2$, $F,K$ do not have neighbours in $C_1$. This ensures that the function in Table \ref{tab:dashes} is 1-Lipschitz.

In Table \ref{tab:dashes} we depict the measures $m_x,m_y$ together with the value of $f$. We let $\beta \coloneqq 1-\alpha$.

\begin{table}[h]
\centering
\begin{tabular}{|*{16}{c|}}
\hline
- & $J$ & $A$ & $B$ & $C$ & $x$ & $y$ & $D$ & $E$ & $F$ & $K$ & $G$ & $H$ & $M$ & $I$ & $L$ \\
\hline
$m_x$ & $\frac{\beta}{d_x}$ & $\frac{\beta}{d_x}$ & $\frac{\beta}{d_x}$ & $\frac{\beta}{d_x}$ & $\alpha$ & $\frac{\beta}{d_x}$ & $\frac{\beta}{d_x}$ & 0 & 0 & 0 & $\frac{\beta}{d_x}$ & $\frac{\beta}{d_x}$ & 0 & 0 & 0 \\
\hline
$m_y$ & 0 & 0 & $\frac{\beta}{d_y}$ & 0 & $\frac{\beta}{d_y}$ & $\alpha$ & $\frac{\beta}{d_y}$ & $\frac{\beta}{d_y}$ & $\frac{\beta}{d_y}$ & $\frac{\beta}{d_y}$ & $\frac{\beta}{d_y}$ & 0 & 0 & $\frac{\beta}{d_y}$ & 0 \\
\hline
$f$ & 0 & 0 & -1 & -1 & -1 & -2 & -2 & -2 & -3 & -2 & -1 & -1 & 0 & -1 & -1 \\
\hline
\end{tabular}
\caption{Probabilities and potential associated with curvature of edge $xy$ }
\label{tab:dashes}
\end{table}

In the following we abuse notation by conflating $S$ with its cardinality $|S|$ where $S$ could be any of the sets $A,\ldots,M$. Let $W=W(m_x,m_y)$. Using \eqref{eqn:was_sup} we have the following bound:
\begin{equation}\label{eqn:dist_3comm}
\begin{split}
        W &\ge \alpha + (1-\alpha) \left[ \frac{-1-C-2-2D-G}{d_x} +\frac{-1+B+2D+2E+2K+3F+G+I}{d_y} \right]
\end{split}
\end{equation}

We also have the following constraint equations from counting nodes and edges:

\begin{equation}
\label{eqn:constraint_3comm}
\begin{split}
    A+B+C+J+1 &= n,\\
    D+E+F+K+1 &= m,\\
    n+D+G+H &= d_x,\\
    m+B+G+I &= d_y.
\end{split}
\end{equation}

Equations \eqref{eqn:dist_3comm} and \eqref{eqn:constraint_3comm} together give
\begin{equation} \label{eqn:wass_estim}
    W \ge \alpha + (1-\alpha) \left[\frac{n-1+A+J+G+H}{d_x}+\frac{m-1+F}{d_y}-1\right].
\end{equation}
Note that the lack of symmetry between $x$ and $y$ in the above expression is due to the lack of symmetry in the way we defined $f$.

Let $k_1$ be the number of intercommunity edges between $C_1$ and $C_2$, and $k_2$ be the number of edges from $C_1$ or $C_2$ to the rest of $\G$. Then,

\begin{equation}
    \begin{split}
        k_1 &\ge 1+B+D+\max\{C,E\},\\
        k_2 &\ge 2G+H+I+\max\{J,L\}+\max\{K,L\}.
    \end{split}
\end{equation}

Defining $k=k_1+k_2$, we have
\begin{equation}
    k \geq 1+B+D+\max\{C,E\}+2G+H+I+\max\{J,L\}+\max\{K,L\}
\end{equation}
which, together with the constraint equations \eqref{eqn:constraint_3comm}, yields
\begin{equation}
    \begin{split}
        k+A &\ge n+D+2G+H+I+\max\{K,L\}\\
        &= d_x+G+I+\max\{K,L\}\\
        &\ge d_x
    \end{split}
\end{equation}
and
\begin{equation}
    \begin{split}
        k+F &\ge m+B+2G+H+I+\max\{J,L\}\\
        &= d_y+G+I+\max\{J,L\}\\
        &\ge d_y.
    \end{split}
\end{equation}

Now we need to find an optimal $k$ such that

\[ \frac{n-1+A+J+G+H}{d_x}+\frac{m-1+F}{d_y}-1
    \ge  \frac{k+A+J+G+H}{d_x}+\frac{k+F}{d_y}-1,\]
which reduces to 
\begin{align*}
    &\frac{n-1}{d_x}+\frac{m-1}{d_y}
    \ge  \frac{k}{d_x}+\frac{k}{d_y}\\
    \iff &(m-1)d_x+(n-1)d_y \ge k(d_x+d_y)\\
    \iff &(m-n)d_x+(n-1)(d_x+d_y) \ge k(d_x+d_y)\\
    \iff &(m-n)d_x \ge (k-n+1)(d_x+d_y) \\
    \iff & \frac{d_x}{d_x+d_y} \ge \frac{k-n+1}{m-n}
\end{align*}

Since $d_x \ge n$ and $d_x + d_y \le n+m+k-1$, a sufficient condition for the above to be true is 
\[ \frac{n}{n+m+k-1} \ge \frac{k-n+1}{m-n},\]
which results in the quadratic inequality
\[ k^2 + mk - (m-1)(2n-1) \le 0.\]

Theorem \ref{thm:main_thm} now follows directly from this.

\emph{Proof of Corollary 1:}

    By letting $m=n$, we get
    \begin{align*}
       k \le &\frac{-m+\sqrt{m^2+4(m-1)(2n-1)}}{2}\\
        =&\frac{-n+\sqrt{n^2+4(n-1)(2n-1)}}{2}\\
        =&\frac{-n+\sqrt{n^2+4(2n^2-3n+1)}}{2}\\
        =&\frac{-n+\sqrt{9n^2-12n+4}}{2}\\
        =&\frac{-n+3n-2}{2}\\
        =& n-1. \qed
    \end{align*}

\emph{Proof of Corollary 2:}

In the proof of main theorem, we may assume that $G,H,L,I,M = \emptyset$. Consequently, $J,K = \emptyset$ as well, and $k_2=0$, and the claim follows immediately. \qed

\section{Empirical Results} \label{sec:empirical}
In the previous section, we gave theoretical bounds on the number of intercommunity edges that guarantee the negativity of curvatures. This bound is sharp due to the following proposition:

\begin{proposition} \label{prop:pos_curv}
    For communities of size $(n,n)$, there is a configuration of $n$ edges between the communities such that each intercommunity edge has positive curvature. 
\end{proposition}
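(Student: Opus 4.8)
The plan is to mirror the construction in the zero-curvature example but add one more intercommunity edge, and then show that this extra edge pushes the curvature of each intercommunity edge strictly above zero. Concretely, let $A$ and $B$ be complete graphs on $\{a_0,\dots,a_{n-1}\}$ and $\{b_0,\dots,b_{n-1}\}$, and add the $n$ edges $a_ib_i$, $i=0,\dots,n-1$ (so now \emph{every} index is matched, unlike the $n-1$ edges in the zero-curvature configuration). By the symmetry of this configuration under simultaneously permuting the $a$- and $b$-indices, all intercommunity edges are equivalent, so it suffices to compute $\kappa(a_0b_0)$, and for this I only need to show $W(m_{a_0},m_{b_0}) < 1$.

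First I would write down the two walk measures: $d_{a_0}=n$ (the $n-1$ neighbours inside $A$ plus $b_0$), similarly $d_{b_0}=n$, so $m_{a_0}$ puts mass $\alpha$ on $a_0$, $\beta/n$ on each of $a_1,\dots,a_{n-1}$ and on $b_0$ (with $\beta=1-\alpha$), and symmetrically for $m_{b_0}$. Then I would exhibit an explicit transference plan achieving cost $<1$: move the mass on each $a_i$ straight across the edge $a_ib_i$ to $b_i$ (cost $1$ per unit, but now \emph{every} $a_i$ has such a partner, so there is no ``leftover'' mass that must travel distance $3$ as in the zero-curvature case), transport $\alpha$ on $a_0$ toward $b_0$ and $\alpha$ already sitting appropriately, and use the overlap of the two measures (both assign positive mass to $a_0$ and to $b_0$) to leave some mass fixed at cost $0$. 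Balancing the marginals, the total cost works out to something of the form $1-c\beta$ with $c>0$ a constant depending only on $\alpha$ and $n$ (not on the unmatched-index defect, which is now zero), giving $W(m_{a_0},m_{b_0}) \le 1-c\beta < 1$ and hence $\kappa(a_0b_0) = 1 - W(m_{a_0},m_{b_0}) > 0$.

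I expect the main obstacle to be bookkeeping the transference plan so that all marginals match exactly while keeping the cost strictly below $1$: one has to decide precisely how much of the $\alpha$-mass at $a_0$ is left in place versus routed to $b_0$, and how the $\beta/n$ masses on $b_0$ (from $m_{a_0}$) and on $a_0$ (from $m_{b_0}$) are cancelled against each other, all within distance-$0$ or distance-$1$ moves. The key qualitative point that makes the proposition true — and which I would emphasize — is that adding the $n$th matched edge removes the single unit of mass that, in the $n-1$-edge configuration, was forced to travel distance $3$; since in that borderline configuration $W$ was exactly $1$, removing any positive contribution from a distance-$3$ (or distance-$2$) transfer strictly decreases $W$ below $1$. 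A clean way to make this rigorous is to start from the optimal plan of the zero-curvature example, observe that it routed a mass of size $\beta/n$ from $a_{n-1}$ along a path of length $3$, and check that in the new graph that same mass can instead be sent along the new edge $a_{n-1}b_{n-1}$ at cost $1$ (a saving of $2\beta/n$ in that term), after rebalancing the small amounts at $a_0,b_0$; the net effect is a strictly smaller Wasserstein distance. One should finally double-check the edge cases $n=3$ (smallest relevant size) explicitly to confirm the constant $c$ is genuinely positive there.
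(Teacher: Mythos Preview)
Your proposal is correct and is essentially identical to the paper's proof: the paper also takes the perfect-matching configuration $a_ib_i$, $i=0,\dots,n-1$, between two copies of $K_n$ (i.e., the Cartesian product $K_n\times K_2$) and exhibits the transference plan that sends each $a_i$ to $b_i$ while leaving the overlap mass at $a_0$ and $b_0$ in place, obtaining cost $1-\tfrac{2(1-\alpha)}{n}$ and hence $\kappa(a_0b_0)\ge \tfrac{2(1-\alpha)}{n}>0$. Your heuristic of ``rerouting the distance-$3$ mass from the zero-curvature plan along the new edge $a_{n-1}b_{n-1}$'' gives exactly this saving of $2\beta/n$, so once you write the plan down explicitly (with $\pi_{a_0a_0}=\pi_{b_0b_0}=\beta/n$ to balance the marginals) your argument and the paper's coincide.
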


\begin{proof}
    Consider the product graph of the complete graph $K_n$ with the graph consisting of only two points joined by an edge. See Figure \ref{fig:pos_config} for an illustration with $n=4$.
    \begin{figure}
    \centering
    \includegraphics[scale=0.5]{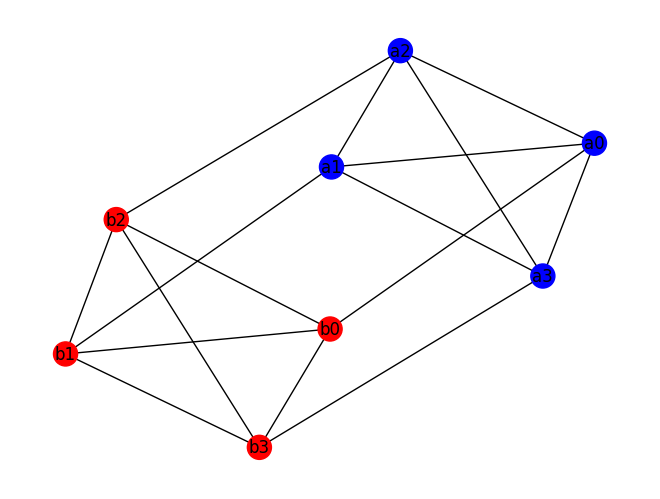}
    \caption{Two communities with all $n$ intercommunity edges positively curved}
    \label{fig:pos_config}
    \end{figure}

    We can show that edge $a_0 b_0$ (and hence every intercommunity edge) has positive curvature. The distributions $m_{a_0},m_{b_0}$ are
    \begin{center}
    \begin{tabular}{|c|c|c|c|c|}
      \hline
      - & $a_0$ & $a_1,\ldots,a_n$ & $b_0$ & $b_1,\ldots,b_n$ \\
      \hline
      $m_{a_0}$ & $\alpha$ & $\frac{1-\alpha}{n}$ & $\frac{1-\alpha}{n}$ & 0 \\
      \hline
      $m_{b_0}$ & $\frac{1-\alpha}{n}$ & $0$ & $\alpha$ & $\frac{1-\alpha}{n}$ \\
      \hline
    \end{tabular}        
    \end{center}

     Consider the transference plan 
    \begin{equation}
        \pi_{a_i b_i}=
        \begin{cases}
            \alpha-\frac{1-\alpha}{n}, \quad i=j=0\\
            \frac{1-\alpha}{n}, \quad i=j \neq 0 \\
            0, \quad i \neq j        
        \end{cases}
    \end{equation}
     This plan has a cost of 
    \[\sum_{i,j} \pi_{a_i b_j}d(a_i,b_j)=\alpha + \frac{(n-2)(1-\alpha)}{n} = 1 - \frac{2(1-\alpha)}{n}\]
    which is an upper bound on $W(m_{a_0},m_{b_0})$, and hence 
    \[ \kappa(a_0 b_0) \ge  \frac{2(1-\alpha)}{n} > 0.\]
\end{proof}

Even though it is theoretically possible for all edges to be positively curved when we have $k=n$ intercommunity edges, the point of the remainder of this section is to share experimental findings that indicate how unlikely such a situation is. For the sake of simplicity, we generate two-community graphs with randomly chosen intercommunity edges and observe the empirical proportion of nonpositively curved edges.


Figure \ref{fig:prop_nonpos_128} shows the result of one such experiment. Here, we choose the community sizes $|C_1|=|C_2|=128$. We experiment with $k=128, 256, 384, 512$ intercommunity edges. For each $k$, we generate 100 graphs where $k$ intercommunity edges are chosen at random. In each of those random graphs, we compute the proportion $p^k_{\le 0}$ of nonpositively curved edges.  Finally, we plot $p^k_{\le 0}$ versus its frequency of occurrence.

One notices in Figure \ref{fig:prop_nonpos_128} that for $k=128, 256$, almost all edges were negatively curved in every one of the 100 randomly generated graphs. When $k=384$, most of the edges are negatively curved. But when $k=512$, the proportion of negatively curved edges is small.

\begin{figure}
    \centering
    \includegraphics[scale=0.5]{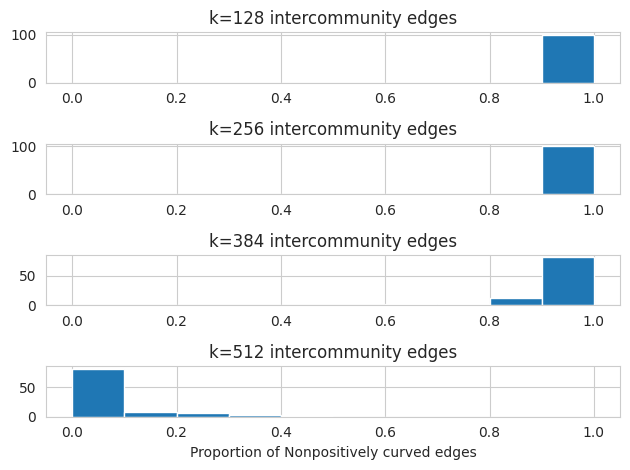}
    \caption{Distribution of Proportion of Nonpositively Curved Intercommunity Edges}
    \label{fig:prop_nonpos_128}
\end{figure}

In fact, we find something similar for different sizes. In Figure \ref{fig:nonpos_k_n}, we examine community sizes $n=16, 32, 64, 128, 256, 512$. And for each $n$, we generate graphs with number of intercommunity edges $k= n, 2n, 3n, 4n$ and empirically find the proportion of intercommunity edges that have nonpositive curvature. For each choice of $(n,k)$, we generate $100$ graphs at random. We plot the empirical proportion of nonpositive edges with error bars one standard deviation wide (over the 100 runs.) What we find is that for $n \ge 32$, even when we have $2n$ intercommunity edges, almost all of them are negatively curved. When $n=512$, even when $k=4n$ we have almost all edges negatively curved.

\begin{figure}
    \centering
    \includegraphics[scale=0.5]{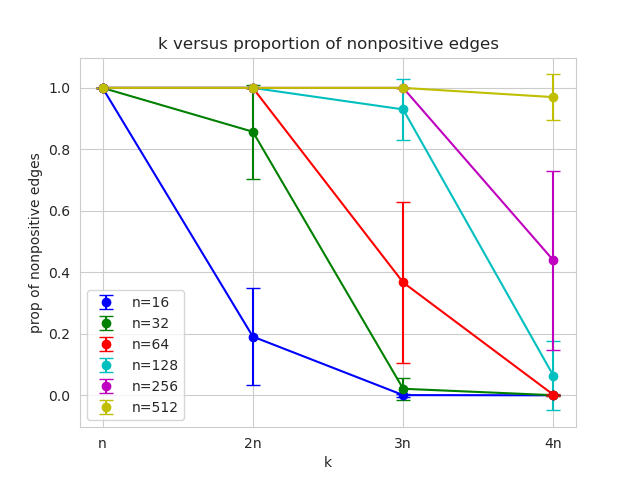}
    \caption{Proportion of Negative Edges for different $k$ and $n$}
    \label{fig:nonpos_k_n}
\end{figure}

\section{Conclusion}

In this paper, we have examined the relation between curvature and community structure from a theoretical point of view. In particular, we have sought to understand how the sizes of communities and the number of intercommunity edges affects the curvature of those intercommunity edges. More specifically, we have given sufficient conditions for intercommunity edges to be negatively curved. In addition, we show these requirements are sharp, in the sense that there are counterexamples as soon as we cross the cutoff. We have achieved this by exploiting two equivalent definitions of the Wasserstein distance between probability distributions, which give us concrete computational tools for proving curvature estimates.

In the experimental section of the paper, we explored how likely it is to find positively curved edges when the number of intercommunity edges $k$ exceeds the theoretical bound. We found that as the community sizes become large, $k$ can get much larger than the theoretical bound while most of the intercommunity edges have negative curvature. A likely explanation for this is in the estimate \eqref{eqn:wass_estim}. Here we see that the Wasserstein distance is large when the ``unmatched" sections $A,F$ are large and the node degrees $d_x,d_y$ are small. When we randomly sample intercommunity edges from the list of all possible intercommunity edges, we are less likely to sample an edge configuration where degrees are very large and the unmatched sections are very small, at least when the number of intercommunity edges is very small.

We believe that this analysis raises several interesting questions. For instance, it would be very interesting to study the curvature of intracommunity edges and obtain criteria that ensure that they are positively curved. In effect, this would provide a theoretical underpinning for curvature-based community detection via edge deletion. Another interesting direction is the analysis of curvature distribution as a function of the number of intercommunity edges. For instance, for a fixed $k$, let $I$ be a random sample of $k$ intercommunity edges from the list of all possible ones. Now we can generate a graph with $I$ as the set of intercommunity edges. Let $\kappa_{max}(I)$ be the maximum curvature among edges in $I$, which we may regard as a random variable. What is the probability distribution of $\kappa_{max}$? How is it affected by $k$? Another interesting question is one that concerns ``surgery", or edge deletion. How does the deletion of a subset of edges of a graph affect the curvature of the remaining edges? Questions of this nature provide ideal circumstances for the synthesis of tools from differential geometry, graph theory, statistics, and programming.

\subsection*{Acknowledgements}
The work of authors SR and VM was partially supported by the US Army Research Office Grant No W911NF2110094.
The work of TB was supported through grant DMS-2105026 of the National Science Foundation.

\bibliographystyle{plain}
\bibliography{references}

\begin{thebibliography}{10}

\bibitem{bauer2013ollivierricci}
Frank Bauer, Jürgen Jost, and Shiping Liu.
\newblock Ollivier-ricci curvature and the spectrum of the normalized graph laplace operator, 2013.

\bibitem{cushing2020rigidity}
D.~Cushing, S.~Kamtue, J.~Koolen, S.~Liu, F.~M\"{u}nch, and N.~Peyerimhoff.
\newblock Rigidity of the {B}onnet-{M}yers inequality for graphs with respect to {O}llivier {R}icci curvature.
\newblock {\em Adv. Math.}, 369:107188, 53, 2020.

\bibitem{cushing2021flatness}
David Cushing, Supanat Kamtue, Riikka Kangaslampi, Shiping Liu, and Norbert Peyerimhoff.
\newblock Curvatures, graph products and {R}icci flatness.
\newblock {\em J. Graph Theory}, 96(4):522--553, 2021.

\bibitem{devriendt2022discrete}
Karel Devriendt and Renaud Lambiotte.
\newblock Discrete curvature on graphs from the effective resistance.
\newblock {\em Journal of Physics: Complexity}, 3(2):025008, 2022.

\bibitem{dunn2005use}
Ruth Dunn, Frank Dudbridge, and Christopher~M Sanderson.
\newblock The use of edge-betweenness clustering to investigate biological function in protein interaction networks.
\newblock {\em BMC bioinformatics}, 6:1--14, 2005.

\bibitem{fesser2023augmentations}
Lukas Fesser, Sergio Serrano de~Haro Iv{\'a}{\~n}ez, Karel Devriendt, Melanie Weber, and Renaud Lambiotte.
\newblock Augmentations of forman's ricci curvature and their applications in community detection.
\newblock {\em arXiv preprint arXiv:2306.06474}, 2023.

\bibitem{forman2003bochner}
Forman.
\newblock Bochner's method for cell complexes and combinatorial ricci curvature.
\newblock {\em Discrete \& Computational Geometry}, 29:323--374, 2003.

\bibitem{fortunato2010community}
Santo Fortunato.
\newblock Community detection in graphs.
\newblock {\em Physics reports}, 486(3-5):75--174, 2010.

\bibitem{fortunato2016community}
Santo Fortunato and Darko Hric.
\newblock Community detection in networks: A user guide.
\newblock {\em Physics reports}, 659:1--44, 2016.

\bibitem{gosztolai2021unfolding}
Adam Gosztolai and Alexis Arnaudon.
\newblock Unfolding the multiscale structure of networks with dynamical ollivier-ricci curvature.
\newblock {\em Nature Communications}, 12(1):4561, 2021.

\bibitem{grout2006constrained}
Vic Grout and Stuart Cunningham.
\newblock A constrained version of a clustering algorithm for switch placement and interconnection in large networks.
\newblock In {\em CAINE}, pages 252--257, 2006.

\bibitem{guimera2005worldwide}
Roger Guimera, Stefano Mossa, Adrian Turtschi, and LA~Nunes Amaral.
\newblock The worldwide air transportation network: Anomalous centrality, community structure, and cities' global roles.
\newblock {\em Proceedings of the National Academy of Sciences}, 102(22):7794--7799, 2005.

\bibitem{krishna1997cluster}
Prasad Krishna, Nitin~H Vaidya, Mainak Chatterjee, and Dhiraj~K Pradhan.
\newblock A cluster-based approach for routing in dynamic networks.
\newblock {\em ACM SIGCOMM computer communication review}, 27(2):49--64, 1997.

\bibitem{lee2018introduction}
John~M Lee.
\newblock {\em Introduction to Riemannian manifolds}, volume~2.
\newblock Springer, 2018.

\bibitem{lin2011ricci}
Yong Lin, Linyuan Lu, and Shing-Tung Yau.
\newblock Ricci curvature of graphs.
\newblock {\em Tohoku Mathematical Journal, Second Series}, 63(4):605--627, 2011.

\bibitem{munch2019ollivier}
Florentin M{\"u}nch and Rados{\l}aw~K Wojciechowski.
\newblock Ollivier ricci curvature for general graph laplacians: heat equation, laplacian comparison, non-explosion and diameter bounds.
\newblock {\em Advances in Mathematics}, 356:106759, 2019.

\bibitem{ni2019community}
Chien-Chun Ni, Yu-Yao Lin, Feng Luo, and Jie Gao.
\newblock Community detection on networks with ricci flow.
\newblock {\em Scientific reports}, 9(1):9984, 2019.

\bibitem{o1992clustering}
Morton~E O'Kelly.
\newblock A clustering approach to the planar hub location problem.
\newblock {\em Annals of Operations Research}, 40(1):339--353, 1992.

\bibitem{ollivier2009ricci}
Yann Ollivier.
\newblock Ricci curvature of markov chains on metric spaces.
\newblock {\em Journal of Functional Analysis}, 256(3):810--864, 2009.

\bibitem{queen2023polymer}
Owen Queen, Gavin~A McCarver, Saitheeraj Thatigotla, Brendan~P Abolins, Cameron~L Brown, Vasileios Maroulas, and Konstantinos~D Vogiatzis.
\newblock Polymer graph neural networks for multitask property learning.
\newblock {\em npj Computational Materials}, 9(1):90, 2023.

\bibitem{rives2003modular}
Alexander~W Rives and Timothy Galitski.
\newblock Modular organization of cellular networks.
\newblock {\em Proceedings of the national Academy of sciences}, 100(3):1128--1133, 2003.

\bibitem{sia2019ollivier}
Jayson Sia, Edmond Jonckheere, and Paul Bogdan.
\newblock Ollivier-{R}icci curvature-based method to community detection in complex networks.
\newblock {\em Scientific reports}, 9(1):9800, 2019.

\bibitem{spirin2003protein}
Victor Spirin and Leonid~A Mirny.
\newblock Protein complexes and functional modules in molecular networks.
\newblock {\em Proceedings of the national Academy of sciences}, 100(21):12123--12128, 2003.

\bibitem{tian2023curvature}
Yu~Tian, Zachary Lubberts, and Melanie Weber.
\newblock Curvature-based clustering on graphs.
\newblock {\em arXiv preprint arXiv:2307.10155}, 2023.

\bibitem{tian2023mixed}
Yu~Tian, Zachary Lubberts, and Melanie Weber.
\newblock Mixed-membership community detection via line graph curvature.
\newblock In {\em NeurIPS Workshop on Symmetry and Geometry in Neural Representations}, pages 219--233. PMLR, 2023.

\bibitem{topping2021understanding}
Jake Topping, Francesco Di~Giovanni, Benjamin~Paul Chamberlain, Xiaowen Dong, and Michael~M Bronstein.
\newblock Understanding over-squashing and bottlenecks on graphs via curvature.
\newblock {\em arXiv preprint arXiv:2111.14522}, 2021.

\bibitem{weber2017characterizing}
Melanie Weber, Emil Saucan, and J{\"u}rgen Jost.
\newblock Characterizing complex networks with forman-ricci curvature and associated geometric flows.
\newblock {\em Journal of Complex Networks}, 5(4):527--550, 2017.

\end{thebibliography}

\end{document}